\algrenewcommand\algorithmicindent{2.0em}%
\tikzstyle{bond2}=[double distance=2pt, thick]
\tikzstyle{bond1}=[thick]
\tikzstyle{bond0}=[color=white, text=black]
\tikzstyle{change}=[style=dotted, thick, text=magenta]
\tikzstyle{dummy}=[inner sep=0pt]
\def\altcyc{\texttt{AltCyc}}
\def\ilptwo{\texttt{ILP2}}
\def\vftwo{\texttt{VF2}}
\def\dom{\mathrm{dom}}
\def\ran{\mathrm{range}}
\newtheorem{proposition}[]{Proposition}
\newtheorem{definition}{Definition}[]
\newcommand\corr{\text{ (\Letter)}}
\newcommand\email[1]{\texttt{#1}}
\title{Automatic~Inference~of Graph~Transformation~Rules Using~the~Cyclic~Nature~of~Chemical~Reactions}
\author[2,8]{Christoph Flamm}
\author[1]{ Daniel Merkle\corr}
\author[2-7]{ Peter F.\ Stadler}
\author[1]{ Uffe Thorsen}
\affil[1]{
	Department of Mathematics and Computer Science, University of Southern Denmark, Odense M DK-5230, Denmark\\
	\email{\{daniel, uthorsen\}@imada.sdu.dk}
}
\affil[2]{
	Institute for Theoretical Chemistry, University of Vienna, Wien A-1090, Austria\\
	\email{xtof@tbi.univie.ac.at}
}
\affil[3]{
	Bioinformatics Group, Department of Computer Science,and  Interdisciplinary Center for   Bioinformatics, University of Leipzig, Leipzig D-04107, Germany\\
	\email{stadler@bioinf.uni-leipzig.de}
}
\affil[4]{
	Max Planck Institute for Mathematics in the Sciences, Leipzig D-04103, Germany
}
\affil[5]{
	Fraunhofer Institute for Cell Therapy and Immunology, Leipzig D-04103, Germany
}
\affil[6]{
	Center for non-coding RNA in Technology and Health,  University of Copenhagen, Frederiksberg C DK-1870, Denmark
}
\affil[7]{
	Santa Fe Institute, 1399 Hyde Park Rd, Santa Fe NM 87501, USA
}
\affil[8]{
	Research Network Chemistry Meets Microbiology, University of Vienna,
	Wien A-1090, Austria
}
\date{}
\begin{document}
\maketitle

\begin{abstract}
  Graph transformation systems have the potential to be realistic models of
  chemistry, provided a comprehensive collection of reaction rules can be
  extracted from the body of chemical knowledge. A first key step for rule
  learning is the computation of atom-atom mappings, i.e., the atom-wise
  correspondence between products and educts of all published chemical
  reactions. This can be phrased as a maximum common edge subgraph problem
  with the constraint that transition states must have cyclic structure.
  We describe a search tree method well suited for small edit distance and
  an integer linear program best suited for general instances and
  demonstrate that it is feasible to compute atom-atom maps at large scales
  using a manually curated database of biochemical reactions as an
  example. In this context we address the network completion problem. 
\end{abstract}

\section{Introduction}
The individual records in databases of chemical reactions typically
describe, apart from more or less detailed meta-information, the
transformation of a set of educts into a set of products
\cite{Warr:14,Wittig:14}.  Both the product and the educt molecules have
representations as labeled graphs, where vertices designate atoms and edges
refer to chemical bonds. Chemical reactions therefore may be understood as
transformations of not necessarily connected graphs
\cite{Benkoe:03b,Yadav:04}. Chemical graph transformations must respect the
fundamental conservation principles of matter and charge and therefore
imply the existence of a bijection between vertex sets (atoms) of the
educts and products which is commonly known as the atom-atom map (AAM).

Chemical graph transformation are by no means arbitrary even when the
conservation laws imposed by the underlying physics are respected. Instead,
they conform to a large, but presumably finite, set of rules which in
chemistry are collectively known as reaction mechanism and ``named
reactions''. A chemical reaction partitions the sets of atoms and
  bonds of the participating molecules into a \emph{reaction center}
  comprising the bonds that change during the reactions and their incident
  atoms, and an remainder that is left unchanged. By virtue of being a
  bijection of the vertex (atom) sets, the AAM unambiguously determines the
  bonds that differ between educt and product molecules and thus it
  identifies the reaction center. The restriction of a chemical
  transformation to the reaction center, on the other hand, serves as
  minimal description of the underlying reaction rule.

The task to infer
transformation rules from empirical chemical knowledge therefore would be
greatly facilitated if each known reactions, i.e., each concrete pair of
educt and product molecules would imply a unique graph transformation.
Unfortunately, the true AAM is unknown in general, and even where the
chemical mechanism, and thus the actual graph transformation, has been
reported in the chemical literature, this information is in general not
stored together with the educt/product pair in a database. The inference of
chemical reaction mechanisms therefore requires that we first solve the
problem of inferring AAMs for the known chemical reactions.

Several computational methods for the AAM problem have been devised and
tested in the past \cite{Chen:13}. The most common formulations are
variants of the maximum common subgraph (isomorphism) problem
\cite{Ehrlich:2011}. In the NP-complete Maximum Common Edge Subgraph (MCES)
variant an isomorphic subgraphs of both the educt and product graph with a
maximal number of edges is identified. An alternative formulation as
Maximum Common Induced Subgraph (MCIS) problem \cite{Akutsu04} is also NP
complete. Algorithmic solutions decompose the molecules until only
isomorphic sub-graphs remain \cite{Akutsu04,Crabtree:2010}. 
In the context of graph transformation systems, few methods to infer transformation rules have been published \cite{Jeltsch:91}, and none applicable in the context of AAMs.

Neither solutions of MCES nor MCIS necessarily describe the true atom map,
however.  There is no reason why the re-organization of chemical bonds in a
chemical reaction should maximize a subgraph problem. Instead, they follow
strict rules that are codified, e.g., in the theory of imaginary
transition states (ITS) \cite{Fujita:86,Hendrickson:97}. There is only a
limited number of ITS ``layouts'' for single step reactions, corresponding
to the cyclic electron redistribution pattern usually involving less than
10 atoms \cite{Herges:1994}. In a most basic case, an elementary reaction,
the broken and newly formed bonds form an alternating cycle of a length
rarely exceeding $6$ or at most $8$ \cite{Hendrickson:97}. In
\cite{Mann:14a} we made use of this chemical constraint to devise a
Constraint Programming approach for elementary homovalent reactions, i.e.,
those chemical transformations that do not change the charge and oxidation
state of an atom. Here, we use an extended representation of chemical
graphs that explicitly represents lone pairs and bond orders; in this
manner the graph representation incorporates more detailed chemical
information.

Advances in bioinformatics technologies made it possible to infer
large-scale metabolic networks automatically from genomic information
\cite{Feist:2009,Biggs:15,Prigent:14}. These network models, however,
suffer from structural gaps in pathways \cite{Breitling:2008,Schaub:09},
caused by orphan metabolic activities, for which no sequences are known
and which cannot be inferred from genomic data. Thus there is an
  urgent need to infer missing metabolic reactions by other means.  We
  illustrate the potential of AAM for the discovery of novel metabolic
  reactions. To this end we determine whether chemically plausible AAMs can
  be founds connecting hypothetical educt/product pairs each consisting of
  one or two known metabolites.

\section{Chemical Reactions are Cyclic}
We model each molecule as a labeled, edge-weighted graph with loops.  While
the graph model used here is similar to most other formalizations of
chemical graphs, it differs in several subtle, but important, details, such as the way charges and lone pairs are modeled:

\begin{definition}[Molecule Graph]\label{defn:molgraph}
  A molecule graph $G=(V, E, l, w)$ is a labeled, edge-weighted, undirected
  graph with loops. The label function $l\colon V\cup E\rightarrow
  \Sigma_V \cup \Sigma_E$ denotes vertex and edge labels, and the weight
  function $w\colon E\rightarrow \mathbb{Z}$ denotes the weight of edges.
  These are assigned so that 
\begin{itemize}
\item Atoms are vertices, with labels denoting which type of atom.
\item Bonds are edges, with labels denoting the bond type and a weight
  encodes the number of involved electron pairs. Hence $1$, $2$, and $3$
  corresponds to single, double and triple bonds.
\item Lone pairs, i.e., pairs of non-bonding electrons, are modeled by 
  loops. Again the weight refers to the number of lone pairs.
\item Charges are modeled using a single special vertex together with edges
  from this special vertex to the charged atoms. The edge weight equals the
  atom's charge.
\item Free radicals, single non-bonding electrons, are modeled using a single special vertex together with edges from this special vertex to the atom with the free radical. The edge weight equals the number of free radicals.
\item Aromatic complexes are modeled by adding a special vertex for each aromatic complex in the molecules. Each atom participating in the aromatic complex has an edge to the special vertex with weight equal to the number of electrons at the atom taking part in the aromatic complex. The aromatic bonds themselves are edges with weight one, but are distinguished from single bonds by the edge label.
\end{itemize}
\end{definition}
See Fig.~\ref{fig:example} for example of molecule graph. See Fig.~\ref{fig:aromatic_ex} in Appendix \ref{app:aromatic} for an example of how the modeling of aromatic complexes works.
\begin{figure}[h!tb]
\begin{center}
\def\L{1}

\begin{tikzpicture}
    \node (c3) [dummy] at (0*\L, 1*\L)          {};
    \node (o3) at (0.866*\L, 0.5*\L)    {O$^-$};
    \node (o1) at (-0.866*\L, -0.5*\L)  {O};
    \node (c2) [dummy] at (-0.866*\L, 0.5*\L)   {};
    \node (o2) at (0*\L, 2*\L)          {O};
    \node (c1) at (-2*\L, 1*\L)         {H$_3$C};
    \path
        (c1) edge [bond1] (c2)
        (c2) edge [bond2] (o1)
        (c2) edge [bond1] (c3)
        (c3) edge [bond2] (o2)
        (c3) edge [bond1] (o3)
    ;
\end{tikzpicture}
\qquad
\qquad
\begin{tikzpicture}[every loop/.style={}]
    \node (c3) at (0*\L, 1*\L)          {C};
    \node (o3) at (0.866*\L, 0.5*\L)    {O};
    \node (o1) at (-0.866*\L, -0.5*\L)  {O};
    \node (c2) at (-0.866*\L, 0.5*\L)   {C};
    \node (o2) at (0*\L, 2*\L)          {O};
    \node (c1) at (-2*\L, 1*\L)         {C};
    \node (h1) at (-2*\L -1.134*\L, 1*\L + 0.5*\L)         {H};
    \node (h2) at (-2*\L -0.5*\L, 1*\L -1.134*\L)         {H};
    \node (h3) at (-2*\L + 0.5*\L, 1*\L + 1.134*\L)         {H};
    \node (charge) at (0.866*\L + \L, 0.5*\L) {$c$};
    \path
        (c1) edge [bond1] (c2)
        (c2) edge [bond2] (o1)
        (c2) edge [bond1] (c3)
        (c3) edge [bond2] (o2)
        (c3) edge [bond1] (o3)
        (c1) edge [bond1] (h1)
        (c1) edge [bond1] (h2)
        (c1) edge [bond1] (h3)
        (o3) edge [bond1, dashed] node [above] {$-1$} (charge)
        (o3) edge [bond1, in=-55, out=-125, loop, min distance=20] (o3)
    ;
\end{tikzpicture}

\end{center}
\caption{Usual depiction and molecule graph for pyruvate. Edge labels omitted. Edge weights shown by number of parallel edges (except where negative).}
\label{fig:example}
\end{figure}
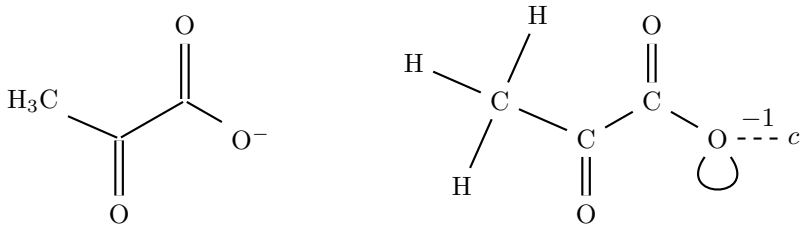

In the following two definitions
it will be convenient to consider instead of $E$ the set $E^*$ of all
possible edges on $V$ with edges in $e\in E^*\setminus E$ having weight
$w(e)=0$.

\begin{definition}[Atom-Atom Mapping]\label{defn:aam}
  Given two molecule graphs $G_1 = (V_1,$ $ E_1, l_1, w_1)$ and
  $G_2=(V_2, E_2, l_2, w_2)$, an atom-atom mapping from $G_1$ to $G_2$
  is a bijection $\psi\colon V_1 \rightarrow V_2$ that preserves
  vertex labels, i.e., $l_1(v) = l_2(\psi(v))$ for all $v\in V_1$.
  With $\psi$ we associate the \emph{cost}
  $c[\psi]= \sum_{e\in E_1^*} |w_2(\psi(e)) - w_1(e)|$.
\end{definition}
The cost measures the total number of electron pairs by which $G_1$
and $G_2$ differ w.r.t.\ to a given AAM.  Minimizing $c(\psi)$ can be
seen as an edit problem \cite{Justice:06,Riesen:09,Gao:10} and is
equivalent to the $NP$-hard MCES problem problem
\cite{Akutsu:13,Chen:13,Ehrlich:2011,Raymond:02,Bahiense:12}. Here we
are only interested in MCES instances that correspond to balanced
chemical reactions. The complexity results, however, also remains
valid also in this case. Next we investigate in some more detail what
exactly changes between $G_1$ and $G_2$ when an AAM $\psi$ is fixed.
\begin{definition}[Transition State]
  The transition state of an AAM $\psi: G_1 \to G_2$
  is the edge weighted graph $T_\psi=(V_\psi, E_\psi, w_\psi)$ 
  where $E_\psi = \{e\in E_1^*\mid w_1(e) \neq w_2(\psi(e))\}$, 
  $w_\psi(e) = w_2(\psi(e))- w_1(e)$, and $V_\psi\subseteq V_1$ 
  are all vertices incident to edges in $E_\psi$.
\end{definition}

By construction of molecule graphs, the weight of each edge is the number
of valence electrons. The atom type, i.e., the label of a vertex
determines the weighted degree $d_w(v) = \sum_{e\in \delta(v)} w(e)$. Here,
loops are counted twice. This reflects that the two electrons per bond order
are shared between the incident atoms, while both electrons of a lone pair
belong to the same atom. As a consequence, $d_w(v)$ is invariant under
all chemically acceptable atom maps. This restriction has important
consequences for the structure of transition states:

\begin{proposition}[Cyclic Transition States]\label{thm:cyclic}
  The transition state $T_\psi$ of an AAM $\psi$ can be decomposed into a
  collection of (not necessarily vertex disjoint) cycles $C_1, C_2, \dots,
  C_k$ with weights $w_{C_1}$, $w_{C_2}$, $\dots,w_{C_k}$ that are
  alternating between $+1$ and $-1$ along the cycles such that $w_\psi(e) =
  \sum_{i=1}^k w_{C_i}(e)$ for all $e\in E_\psi$.
\end{proposition}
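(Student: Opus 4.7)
The plan is to derive the decomposition from a vertex-wise conservation law implied by the fact that $\psi$ preserves labels. Since the label of $v$ determines the weighted degree $d_w(v)$ (with loops counted twice), and $l_1(v)=l_2(\psi(v))$, we have $d_{w_1}(v)=d_{w_2}(\psi(v))$ for every $v\in V_1$. Restricting to the transition state and writing $\delta(v)$ for the set of edges of $E_1^*$ incident to $v$, this yields
\[
\sum_{e\in \delta(v)\cap E_\psi} w_\psi(e)\;=\;\sum_{e\in \delta(v)} \bigl(w_2(\psi(e))-w_1(e)\bigr)\;=\;d_{w_2}(\psi(v))-d_{w_1}(v)\;=\;0,
\]
again counting loops twice. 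So at every vertex of $T_\psi$ the sum of incident signed edge-weights vanishes.

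Next I would translate this identity into a combinatorial balance condition by splitting each $e\in E_\psi$ into $|w_\psi(e)|$ parallel copies, each carrying the sign $\mathrm{sgn}(w_\psi(e))\in\{+1,-1\}$. This produces a signed multigraph $H$ in which the conservation law says: at each vertex $v$, the number of incident edge-ends of sign $+1$ equals the number of sign $-1$ (a loop contributing two ends of the same sign). For each vertex fix an arbitrary bijection between its positive and negative incident ends.

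Now I would extract the cycles greedily. Starting from any edge $e_1$ of $H$ incident to $v_0,v_1$ with sign $s_1$, follow at $v_1$ the bijection to a paired end of sign $-s_1$, giving the next edge $e_2$ to some $v_2$, and continue. Since every bijection step flips the sign, the produced walk has alternating $\pm 1$ signs along its edges; since $H$ is finite and no edge-end is revisited, the walk closes, yielding a cycle $C$ (allowing repeated vertices but not repeated edges) with weights alternating $\pm 1$. Removing the edges of $C$ from $H$ leaves the per-vertex balance intact, so iterating until $H$ is empty yields the required family $C_1,\dots,C_k$; the identity $w_\psi(e)=\sum_i w_{C_i}(e)$ then holds edge by edge, since the $|w_\psi(e)|$ copies of $e$ are distributed among cycles all carrying sign $\mathrm{sgn}(w_\psi(e))$ on $e$.

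The one technical obstacle is the bookkeeping around loops, special (charge, radical, aromatic) vertices, and walks that must close as \emph{alternating} cycles. The sign must flip both when traversing an edge and when passing through the bijection at a vertex, so care is needed for (i) loops, which occupy two incident ends at the same vertex and thus permit a cycle of length one whose single edge carries a well-defined sign, and (ii) parity: because each edge contributes consecutive edges of opposite sign in the walk, the closed walk automatically has even-signed cyclic alternation only if we insist on returning not merely to $v_0$ but to the same end of $e_1$; this is guaranteed because each end of $H$ is used at most once and the pairing is a bijection, so the walk must terminate by reusing the unique unvisited end incident to $e_1$. Verifying this closure condition rigorously is the delicate part of the argument.
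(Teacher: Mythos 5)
Your proof is correct and follows essentially the same route as the paper: both derive the zero-flux condition $\sum_{e\in\delta(v)}w_\psi(e)=0$ at each vertex from label preservation of the weighted degree, then greedily peel off sign-alternating closed walks until no nonzero weight remains, with your dart-splitting and end-pairing being a more rigorous bookkeeping of the paper's ``decrement a positive edge, then increment a negative edge'' procedure. One minor remark: the worry in your last paragraph about loops forming length-one cycles is moot, since your bijection pairs only ends of \emph{opposite} sign and a loop's two ends share the same sign, so a loop is always absorbed into a longer alternating cycle.
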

\begin{proof}
  Since AAMs preserve vertex labels and vertex labels imply weighted degree
  the ``zero-flux condition'' $\sum_{e\in\delta(v)} w_\psi(e) = 0$ holds
  for all $v\in V_\psi$. We consider the following algorithm to construct a
  cycle $C$. Starting from a vertex $v$ we choose an $\{v,u\}$, with
  $w_\psi(\{v,u\}) > 0$, decrement $w_\psi(\{v,u\})$ by one and add
  $\{v,u\}$ to $C$. The vertex $u$ must be incident to an edge $\{u,w\}$
  with $w_\psi(\{u,w\}) < 0$, since otherwise the weighted valence would
  not be constant under $\psi$. We increase $w_\psi(\{u,w\})$ by one and add $\{u,w\}$
  to $C$. The process is repeated until we return to $v$, which is
  guaranteed by the finiteness of $V$. Clearly, $C$ is an Eulerian graph,
  i.e., all its vertex degrees are even. The procedure is repeated until no
  edges with $w_\psi \neq 0$ is left.  If $C$ contains a vertex with degree
  larger than two, we repeat the procedure recursively on $C$ until we are
  left with elementary cycles only.  
\end{proof}

The (weighted) degree $\delta_{\psi}(v):=\sum_{e:v\in e} |w_\psi(e)|$ of a
vertex in $T_\psi$ is even because in each step of the proof the value of
$\delta_{\psi}(v)$ is reduced by $2$.  Thus $T_\psi$ is a generalization of
an Eulerian graph, and Prop.~\ref{thm:cyclic} is the corresponding variant of
Veblen's theorem \cite{Veblen:1912}, which states that a graph is Eulerian
if and only if it is an edge-disjoint union of cycles.

\section{Finding Atom-Atom Mappings}

The cyclic nature of the transition states established in
Prop.~\ref{thm:cyclic} inspires two methods for finding minimum cost AAMs
described below. The idea was used in \cite{Mann:14a} in a much more restrictive
chemical setting.

\subsection{{\altcyc} --- A Search Tree Approach}
The idea of {\altcyc} is to construct a candidate transition state with a
given cost $\ell$ in a stepwise fashion and to simultaneously map $V_1$ to
$V_2$. The search for transition states proceeds depth first.  The validity
of a candidate is then checked by testing whether $G_1\setminus E_\psi$ and
$G_2\setminus \psi(E_\psi)$ are isomorphic. Finally, the parameter $\ell$
is increased until a valid mapping is found. A recursive definition of
{\altcyc} is given as Algorithm~\ref{alg:altcyc_disjoint}.

\begin{algorithm}[H]
\caption{\texttt{AltCyc}$(\psi, P, k, \sigma)$}
\label{alg:altcyc_disjoint}
\begin{algorithmic}
	\IF{$k = 1$}
		\IF{$w_1(P\!\text{.head},P\!\text{.tail}) + \sigma = w_2(\psi(P\!\text{.head}),\psi(P\!\text{.tail}))$}
			\STATE \texttt{Complete}$(\psi, P)$
		\ENDIF
	\ELSE
		\FOR{$i \in V_1 ~\land~ i\notin \dom(\psi)$}
			\FOR{$p \in V_2 ~\land~ p \notin \ran(\psi)$}
					\IF{$l_1(i)=l_2(p) ~\land~ w_1(P\!\text{.head},i) + \sigma =w_2(\psi(P\!\text{.head}),p) $}
						\STATE $\psi \leftarrow \psi \cup \{i\mapsto p\}$
						\STATE \texttt{AltCyc}$(\psi, P\!\text{.append}(i), k -1 , -1 \cdot \sigma)$
				\ENDIF
			\ENDFOR
		\ENDFOR
	\ENDIF
\end{algorithmic}
\end{algorithm}

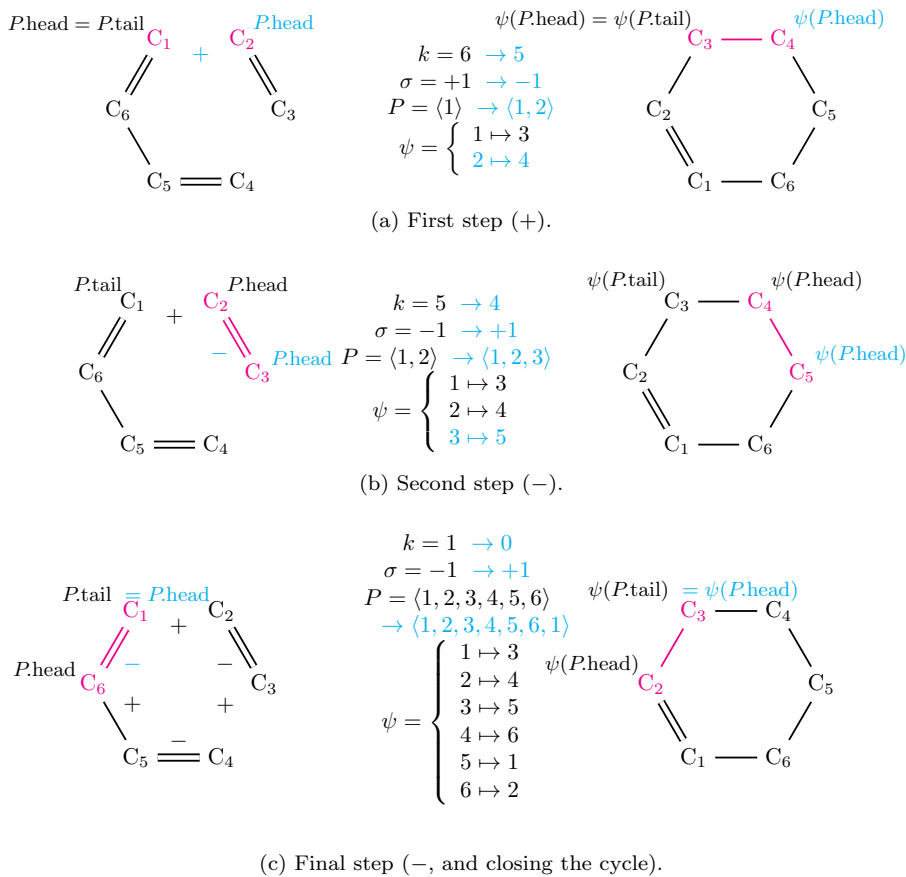
\begin{figure}[h!tb]
\begin{center}
\def\L{1.3}
\def\W{8.5}
\def\C{cyan}
\def\CC{magenta}

\subfloat[First step ($+$).]{\label{fig:example_a}
\resizebox{\textwidth}{!}{%
\begin{tikzpicture}
    \node (dummy1) at (-2*\L,0) {};
    \node (dummy2) at (\W+2*\L,0) {};

    \node (a3) at (\L,0) {C$_3$};
    \node [color=\CC, label={[label distance=-9pt, font=\small, color=\C]45:$P\!\text{.head}$}] (a2) at (\L*0.5,\L*0.866) {C$_2$};
    \node [label={[label distance=-7pt, font=\small]93:$P\!\text{.head}=P\!\text{.tail}~$}, color=\CC] (a1) at (\L*-0.5,\L*0.866) {C$_1$};
    \node (a6) at (\L*-1,0) {C$_6$};
    \node (a5) at (\L*-0.5,\L*-0.866) {C$_5$};
    \node (a4) at (\L*0.5,\L*-0.866) {C$_4$};
    \path
        (a1) edge [bond0] node [below, color=\C] {$+$} (a2)
        (a2) edge [bond2] (a3)
        (a3) edge [bond0] (a4)
        (a4) edge [bond2] (a5) 
        (a5) edge [bond1] (a6)
        (a6) edge [bond2] (a1);

    \node [align=center] (i1) at (\W*0.5,0) {
        $k=6 \color{\C}{ ~\rightarrow 5}$\\
        $\sigma=+1 \color{\C}{ ~\rightarrow -1}$\\
        $P = \langle 1 \rangle \color{\C}{ ~\rightarrow \langle 1, 2\rangle}$\\
        $\psi=\left\{\begin{array}{l}

            1 \mapsto 3\\

            \color{\C}{2 \mapsto 4}

        \end{array}\right.$};

    \node (b5) at (\L+\W,0) {C$_5$};
    \node [color=\CC, label={[label distance=-9pt, font=\small, color=\C]45:$\psi(P\!\text{.head})$}] (b4) at (\L*0.5+\W,\L*0.866) {C$_4$};
    \node [label={[label distance=-7pt, font=\small]93:$\psi(P\!\text{.head})=\psi(P\!\text{.tail})$}, color=\CC] (b3) at (\L*-0.5+\W,\L*0.866) {C$_3$};
    \node (b2) at (\L*-1+\W,0) {C$_2$};
    \node (b1) at (\L*-0.5+\W,\L*-0.866) {C$_1$};
    \node (b6) at (\L*0.5+\W,\L*-0.866) {C$_6$};
    \path
        (b1) edge [bond2] (b2)
        (b2) edge [bond1] (b3)
        (b3) edge [bond1, color=\CC] (b4)
        (b4) edge [bond1] (b5)
        (b5) edge [bond1] (b6)
        (b6) edge [bond1] (b1);
\end{tikzpicture}
}
}

\subfloat[Second step ($-$).]{\label{fig:example_b}
\resizebox{\textwidth}{!}{%
\begin{tikzpicture}
    \node (dummy1) at (-2*\L,0) {};
    \node (dummy2) at (\W+2*\L,0) {};

    \node [color=\CC, label={[label distance=-7pt, font=\small, color=\C]5:$P\!\text{.head}$}] (a3) at (\L,0) {C$_3$};
    \node [label={[label distance=-9pt, font=\small]45:$P\!\text{.head}$}, color=\CC] (a2) at (\L*0.5,\L*0.866) {C$_2$};
    \node [label={[label distance=-9pt, font=\small]135:$P\!\text{.tail}$}] (a1) at (\L*-0.5,\L*0.866) {C$_1$};
    \node (a6) at (\L*-1,0) {C$_6$};
    \node (a5) at (\L*-0.5,\L*-0.866) {C$_5$};
    \node (a4) at (\L*0.5,\L*-0.866) {C$_4$};
    \path
        (a1) edge [bond0] node [below] {$+$} (a2)
        (a2) edge [bond2, color=\CC] node [below left, color=\C] {$-$}  (a3)
        (a3) edge [bond0] (a4)
        (a4) edge [bond2] (a5) 
        (a5) edge [bond1] (a6)
        (a6) edge [bond2] (a1);

    \node [align=center] (i1) at (\W*0.5,0) {
        $k=5 \color{\C}{ ~\rightarrow 4}$\\
        $\sigma=-1 \color{\C}{ ~\rightarrow +1}$\\
        $P = \langle 1,2 \rangle \color{\C}{ ~\rightarrow \langle 1, 2,3\rangle}$\\
        $\psi=\left\{\begin{array}{l}
            1 \mapsto 3\\
            2 \mapsto 4\\

            \color{\C}{3 \mapsto 5}
        \end{array}\right.$};

    \node [color=\CC, label={[label distance=-7pt, font=\small, color=\C]5:$\psi(P\!\text{.head})$}] (b5) at (\L+\W,0) {C$_5$};
    \node [label={[label distance=-9pt, font=\small]45:$\psi(P\!\text{.head})$}, color=\CC] (b4) at (\L*0.5+\W,\L*0.866) {C$_4$};
    \node [label={[label distance=-9pt, font=\small]135:$\psi(P\!\text{.tail})$}] (b3) at (\L*-0.5+\W,\L*0.866) {C$_3$};
    \node (b2) at (\L*-1+\W,0) {C$_2$};
    \node (b1) at (\L*-0.5+\W,\L*-0.866) {C$_1$};
    \node (b6) at (\L*0.5+\W,\L*-0.866) {C$_6$};
    \path
        (b1) edge [bond2] (b2)
        (b2) edge [bond1] (b3)
        (b3) edge [bond1] (b4)
        (b4) edge [bond1, color=\CC] (b5)
        (b5) edge [bond1] (b6)
        (b6) edge [bond1] (b1);
\end{tikzpicture}
}
}

\subfloat[Final step ($-$, and closing the cycle).]{\label{fig:example_c}
\resizebox{\textwidth}{!}{%
\begin{tikzpicture}
    \node (dummy1) at (-2*\L,0) {};
    \node (dummy2) at (\W+2*\L,0) {};

    \node (a3) at (\L,0) {C$_3$};
    \node (a2) at (\L*0.5,\L*0.866) {C$_2$};
    \node [label={[label distance=-7pt, font=\small]90:$P\!\text{.tail}\color{\C}{ ~= P\!\text{.head}}$}, color=\CC] (a1) at (\L*-0.5,\L*0.866) {C$_1$};
    \node [label={[label distance=-8pt, font=\small]170:$P\!\text{.head}\;$}, color=\CC] (a6) at (\L*-1,0) {C$_6$};
    \node (a5) at (\L*-0.5,\L*-0.866) {C$_5$};
    \node (a4) at (\L*0.5,\L*-0.866) {C$_4$};
    \path
        (a1) edge [bond0] node [below] {$+$} (a2)
        (a2) edge [bond2] node [below left] {$-$} (a3)
        (a3) edge [bond0] node [above left] {$+$} (a4)
        (a4) edge [bond2] node [above] {$-$} (a5) 
        (a5) edge [bond1] node [above right] {$+$} (a6)
        (a6) edge [bond2, color=\CC] node [below right, color=\C] {$-$} (a1);

    \node [align=center] (i1) at (\W*0.5,0) {
        $k=1 \color{\C}{ ~\rightarrow 0}$\\
        $\sigma=-1 \color{\C}{ ~\rightarrow +1}$\\
        $P = \langle 1,2,3,4,5,6 \rangle$ \\
        $\color{\C}{ ~~~~\rightarrow \langle 1, 2,3,4,5,6,1\rangle}$\\
        $\psi=\left\{\begin{array}{l}

            1 \mapsto 3\\
            2 \mapsto 4\\
            3 \mapsto 5\\

            4 \mapsto 6\\
            5 \mapsto 1\\

            6 \mapsto 2

        \end{array}\right.$\\
        ~\\~};

    \node (b5) at (\L+\W,0) {C$_5$};
    \node (b4) at (\L*0.5+\W,\L*0.866) {C$_4$};
    \node [label={[label distance=-7pt, font=\small]90:$\psi(P\!\text{.tail})\color{\C}{ ~= \psi(P\!\text{.head})}$}, color=\CC] (b3) at (\L*-0.5+\W,\L*0.866) {C$_3$};
    \node [label={[label distance=-8pt, font=\small]170:$\psi(P\!\text{.head})$}, color=\CC] (b2) at (\L*-1+\W,0) {C$_2$};
    \node (b1) at (\L*-0.5+\W,\L*-0.866) {C$_1$};
    \node (b6) at (\L*0.5+\W,\L*-0.866) {C$_6$};
    \path
        (b1) edge [bond2] (b2)
        (b2) edge [bond1, color=\CC] (b3)
        (b3) edge [bond1] (b4)
        (b4) edge [bond1] (b5)
        (b5) edge [bond1] (b6)
        (b6) edge [bond1] (b1);
\end{tikzpicture}
}
}

\end{center}
\caption{Stepwise execution of \altcyc. Cyan marks the changes within the step. Magenta marks the considered edges and incident vertices.
}
\label{fig:altcyc_ex}
\end{figure}

To explain the algorithm, we first restrict ourself to mappings with
transition states consisting of a single elementary cycle. The four
parameters are a partial atom-atom mapping candidate $\psi$, the partial
transition state $P$ constructed so far encoded as a list of vertices from
$V_1$, and the number $k$ of edges still to be identified, and the variable 
$\sigma\in \{-1, 1\}$ that determines whether the current step will add 
or remove weight.

The search in {\altcyc} starts from all pairs $(i,p)$ with $i\in V_1$ and
$p\in V_2$ with $l_2(p)=l_1(i)$; the map $\psi$ is initalized $\psi(i)=p$
and the path starts with $P=\{i\}$. W.l.o.g., the first step is a positive
change of weight, i.e., $\sigma=1$.  In each step in the algorithm, a new
pair $(i,p)\in V_1\times V_2$ with matching labels is found and if the
$w_1(\{P\!\text{.head}, i\})$ and $w_2(\{\psi(P\!\text{.head}), p\})$
differ by exactly one, $i$ is appended to $P$, $\psi$ is extended such
that $\psi(i) = p$ and the algorithm is called again with $k$ replaced by
$k-1$. If $k=1$ has been reached, it only remains to close the alternating
cycle. If this is possible, the candidate transition state is extended to a
full AAM where no further changes are allowed. To this end, a graph
isomorphism algorithm is used.  We use {\vftwo} \cite{Cordella:04} in
procedure \texttt{Complete} (see Appendix \ref{app:code}) because it
has the added benefit of using data structures that are similar to those
used in other parts of {\altcyc}.  The first two and the last step of
{\altcyc} applied to a Diels-Alder reaction are shown in Fig.~\ref{fig:altcyc_ex}.

In order to handle transition states that are connected but not elementary
cycles, as the case of a bi-cyclic or coarctate reaction
\cite{Hendrickson:97}, we modify {\altcyc} to allow weight differences
larger than one. Such vertices must then be revisited. In addition, we
disallow using the same edge with different signs of $\sigma$ because a
pair of such steps would cancel. The modified approach is outlined in
Algorithm~\ref{alg:altcyc_nondisjoint}. The key point is that we now need
to keep track of the weight changes, $w_P(e)$, that we have already made
along an edge $e$ (found using the procedure \texttt{WeightAlongPath}, see Appendix \ref{app:code}).
The condition for acceptable weight differences becomes
$w_1(e) + w_P(e) + \sigma \leq w_2(\psi(e))$ if a bond is added
($\sigma=1$), and $w_1(e) + w_P(e) + \sigma \geq w_2(\psi(e))$ for bond
subtraction ($\sigma=-1$).

\begin{algorithm}[H]
\caption{\texttt{AltCyc}$^*(\psi, P, k, \sigma)$}
\label{alg:altcyc_nondisjoint}
\begin{algorithmic}
  \STATE \textcolor{gray}{// As \texttt{AltCyc}$\dots$}
  \FOR{$(i,p) \in V_1\times V_2 \textbf{ with } l_1(i) = l_2(p)$}
     \IF{$i\notin \dom(\psi) ~\land~ p\notin \ran(\psi)$}
        \STATE \textcolor{gray}{// As \texttt{AltCyc}, 
               but using $\leq$ and $\geq\dots$}
     \ELSIF{$\psi(i) = p $}
        \STATE $w_P \leftarrow 
               \texttt{WeightAlongPath}(\{P\!\text{.head}, i\},P)$
     \IF{$w_P \geq 0 ~\land~ \sigma=1$}
        \IF{$w_1(P\!\text{.head}, i) + w_P + 
               \sigma \leq w_2(\psi(P\!\text{.head}), p)$}
           \STATE \texttt{AltCyc}$^*(\psi, P\!\text{.append}(i), k-1, -1 \cdot \sigma)$
        \ENDIF
        \ELSIF{$w_P \leq 0 ~\land~ \sigma=-1$}
           \STATE \textcolor{gray}{// Symmetric case$\dots$}
        \ENDIF
     \ENDIF
  \ENDFOR
\end{algorithmic}
\end{algorithm}

There is no guarantee that the transition state is connected. To accommodate
disconnected transition states it suffices to replace the path $P$ by a
list of paths, where the last path is the current path and all previous
paths are kept in order to correctly calculate $w_P(e)$. If a path closes
before $k=0$ is reached, the current cycle is completed and the algorithm
restarts to build new path from another initial vertex.

The stepwise approach in {\altcyc} naturally allows for an elucidation
of the mechanism underlying an AAM found by the algorithm. In Fig.~\ref{fig:reaction_ex}
 the automatic inference of such a mechanism is illustrated. Each step in the figure, the usual way of drawing arrow pushing diagrams, corresponds to two steps in {\altcyc}.

\begin{figure}[h!tb]
\begin{center}
\def\L{0.65}
\subfloat[Initial molecule.]{
\begin{tikzpicture}
    \node (c9) at (\L,0) {C};
    \node (c10) at (\L*0.5,\L*0.866) {C};
    \node (c5) at (\L*-0.5,\L*0.866) {C};
    \node (c6) at (\L*-1,0) {C};
    \node (c7) at (\L*-0.5,\L*-0.866) {C};
    \node (c8) at (\L*0.5,\L*-0.866) {C};
    \node (c4) at (\L*2*-0.5, \L*2*0.866) {C};
    \node (c1) at (\L*2*-1, \L*2*0) {C};
    \node (c3) at (\L*2*-0.5 - \L, \L*2*0.866) {C};
    \node (c2) at (\L*-0.5 - \L*2,\L*0.866) {C};
    \node (c11) at (\L*2*0.5,\L*2*0.866) {C};
    \node (o15) at (\L*2,0) {O};
    \node (c13) at (\L*0.5 + \L*2,\L*0.866) {C};
    \node (c12) at (\L*2*0.5 + \L,\L*2*0.866) {C};
    \node (o14) at (\L*0.5 + \L*2 + \L,\L*0.866) {O};
    \node (c0) at (\L*2*-1, \L*2*0 - \L) {C};
    \node (c18) at (\L*2*-1 - \L, \L*2*0 - \L*0.5) {C};
    \node (h37) at (\L*0.5 + \L*2,\L*-0.866) {H};
    \node (h37_var) at (\L*-0.5 - \L*2 -\L, \L*0.866) {};
    \path
        (c18) edge [bond1] (c1)
        (c0) edge [bond1] (c1)
        (c1) edge [bond2] (c2)
        (c2) edge [bond1] (c3)
        (c3) edge [bond1] (c4)
        (c4) edge [bond1] (c5)
        (c5) edge [bond2] (c6)
        (c6) edge [bond1] (c7)
        (c7) edge [bond1] (c8)
        (c8) edge [bond1] (c9)
        (c9) edge [bond2] (c10)
        (c10) edge [bond1] (c11)
        (c11) edge [bond1] (c12)
        (c12) edge [bond1] (c13)
        (c13) edge [bond1] (o15)
        (c13) edge [bond2] (o14)
        (c1) edge [bond0] (c6)
        (c5) edge [bond0] (c10)
        (c9) edge [bond0] (o15)
        (o15) edge [bond1] node (15-H) {} (h37)
        ;
    \path
        (15-H) edge [bend left, ->,  >=stealth, color=red] (c9) 
        ;
\end{tikzpicture}
}
\qquad
\subfloat[After two bond changes.]{
\begin{tikzpicture}
    \node (c9) at (\L,0) {C};
    \node (c10) at (\L*0.5,\L*0.866) {C};
    \node (c5) at (\L*-0.5,\L*0.866) {C};
    \node (c6) at (\L*-1,0) {C};
    \node (c7) at (\L*-0.5,\L*-0.866) {C};
    \node (c8) at (\L*0.5,\L*-0.866) {C};
    \node (c4) at (\L*2*-0.5, \L*2*0.866) {C};
    \node (c1) at (\L*2*-1, \L*2*0) {C};
    \node (c3) at (\L*2*-0.5 - \L, \L*2*0.866) {C};
    \node (c2) at (\L*-0.5 - \L*2,\L*0.866) {C};
    \node (c11) at (\L*2*0.5,\L*2*0.866) {C};
    \node (o15) at (\L*2,0) {O};
    \node (c13) at (\L*0.5 + \L*2,\L*0.866) {C};
    \node (c12) at (\L*2*0.5 + \L,\L*2*0.866) {C};
    \node (o14) at (\L*0.5 + \L*2 + \L,\L*0.866) {O};
    \node (c0) at (\L*2*-1, \L*2*0 - \L) {C};
    \node (c18) at (\L*2*-1 - \L, \L*2*0- \L*0.5) {C};
    \node (h37) at (\L*0.5 + \L*2,\L*-0.866) {H$^+$};
    \node (h37_var) at (\L*-0.5 - \L*2 -\L, \L*0.866) {};
    \path
        (c18) edge [bond1] (c1)
        (c0) edge [bond1] (c1)
        (c1) edge [bond2] (c2)
        (c2) edge [bond1] (c3)
        (c3) edge [bond1] (c4)
        (c4) edge [bond1] (c5)
        (c5) edge [bond2] (c6)
        (c6) edge [bond1] (c7)
        (c7) edge [bond1] (c8)
        (c8) edge [bond1] (c9)
        (c9) edge [bond2] node (9-10) {} (c10)
        (c10) edge [bond1] (c11)
        (c11) edge [bond1] (c12)
        (c12) edge [bond1] (c13)
        (c13) edge [bond1] (o15)
        (c13) edge [bond2] (o14)
        (c1) edge [bond0] (c6)
        (c5) edge [bond0] (c10)
        (c9) edge [bond1] (o15)
        (o15) edge [bond0] node (15-H) {} (h37)
        ;
    \path
        (9-10) edge [bend left, ->,  >=stealth, color=red] (c5) 
        ;
\end{tikzpicture}
}

\subfloat[After four bond changes.]{
\begin{tikzpicture}
    \node (c9) at (\L,0) {C};
    \node (c10) at (\L*0.5,\L*0.866) {C};
    \node (c5) at (\L*-0.5,\L*0.866) {C};
    \node (c6) at (\L*-1,0) {C};
    \node (c7) at (\L*-0.5,\L*-0.866) {C};
    \node (c8) at (\L*0.5,\L*-0.866) {C};
    \node (c4) at (\L*2*-0.5, \L*2*0.866) {C};
    \node (c1) at (\L*2*-1, \L*2*0) {C};
    \node (c3) at (\L*2*-0.5 - \L, \L*2*0.866) {C};
    \node (c2) at (\L*-0.5 - \L*2,\L*0.866) {C};
    \node (c11) at (\L*2*0.5,\L*2*0.866) {C};
    \node (o15) at (\L*2,0) {O};
    \node (c13) at (\L*0.5 + \L*2,\L*0.866) {C};
    \node (c12) at (\L*2*0.5 + \L,\L*2*0.866) {C};
    \node (o14) at (\L*0.5 + \L*2 + \L,\L*0.866) {O};
    \node (c0) at (\L*2*-1, \L*2*0 - \L) {C};
    \node (c18) at (\L*2*-1 - \L, \L*2*0- \L*0.5) {C};
    \node (h37) at (\L*0.5 + \L*2,\L*-0.866) {};
    \node (h37_var) at (\L*-0.5 - \L*2 -\L, \L*0.866) {H$^+$};
    \path
        (c18) edge [bond1] (c1)
        (c0) edge [bond1] (c1)
        (c1) edge [bond2] (c2)
        (c2) edge [bond1] (c3)
        (c3) edge [bond1] (c4)
        (c4) edge [bond1] (c5)
        (c5) edge [bond2] node (5-6) {} (c6)
        (c6) edge [bond1] (c7)
        (c7) edge [bond1] (c8)
        (c8) edge [bond1] (c9)
        (c9) edge [bond1] node (9-10) {} (c10)
        (c10) edge [bond1] (c11)
        (c11) edge [bond1] (c12)
        (c12) edge [bond1] (c13)
        (c13) edge [bond1] (o15)
        (c13) edge [bond2] (o14)
        (c1) edge [bond0] (c6)
        (c5) edge [bond1] (c10)
        (c9) edge [bond1] (o15)
        (o15) edge [bond0] node (15-H) {} (h37)
        ;
    \path
        (5-6) edge [bend right, ->,  >=stealth, color=red] (c1) 
        ;
\end{tikzpicture}
}
\qquad
\subfloat[After six bond changes.]{
\begin{tikzpicture}
    \node (c9) at (\L,0) {C};
    \node (c10) at (\L*0.5,\L*0.866) {C};
    \node (c5) at (\L*-0.5,\L*0.866) {C};
    \node (c6) at (\L*-1,0) {C};
    \node (c7) at (\L*-0.5,\L*-0.866) {C};
    \node (c8) at (\L*0.5,\L*-0.866) {C};
    \node (c4) at (\L*2*-0.5, \L*2*0.866) {C};
    \node (c1) at (\L*2*-1, \L*2*0) {C};
    \node (c3) at (\L*2*-0.5 - \L, \L*2*0.866) {C};
    \node (c2) at (\L*-0.5 - \L*2,\L*0.866) {C};
    \node (c11) at (\L*2*0.5,\L*2*0.866) {C};
    \node (o15) at (\L*2,0) {O};
    \node (c13) at (\L*0.5 + \L*2,\L*0.866) {C};
    \node (c12) at (\L*2*0.5 + \L,\L*2*0.866) {C};
    \node (o14) at (\L*0.5 + \L*2 + \L,\L*0.866) {O};
    \node (c0) at (\L*2*-1, \L*2*0 - \L) {C};
    \node (c18) at (\L*2*-1 - \L, \L*2*0- \L*0.5) {C};
    \node (h37) at (\L*0.5 + \L*2,\L*-0.866) {};
    \node (h37_var) at (\L*-0.5 - \L*2 -\L, \L*0.866) {H$^+$};
    \path
        (c18) edge [bond1] (c1)
        (c0) edge [bond1] (c1)
        (c1) edge [bond2] node (1-2) {} (c2)
        (c2) edge [bond1] (c3)
        (c3) edge [bond1] (c4)
        (c4) edge [bond1] (c5)
        (c5) edge [bond1] node (5-6) {} (c6)
        (c6) edge [bond1] (c7)
        (c7) edge [bond1] (c8)
        (c8) edge [bond1] (c9)
        (c9) edge [bond1] node (9-10) {} (c10)
        (c10) edge [bond1] (c11)
        (c11) edge [bond1] (c12)
        (c12) edge [bond1] (c13)
        (c13) edge [bond1] (o15)
        (c13) edge [bond2] (o14)
        (c1) edge [bond1] (c6)
        (c5) edge [bond1] (c10)
        (c9) edge [bond1] (o15)
        (o15) edge [bond0] node (15-H) {} (h37)
        ;
    \path
        (1-2) edge [bend left, ->,  >=stealth, color=red] (h37_var) 
        ;
\end{tikzpicture}
}

\subfloat[Resulting molecule.]{
\begin{tikzpicture}
    \node (c9) at (\L,0) {C};
    \node (c10) at (\L*0.5,\L*0.866) {C};
    \node (c5) at (\L*-0.5,\L*0.866) {C};
    \node (c6) at (\L*-1,0) {C};
    \node (c7) at (\L*-0.5,\L*-0.866) {C};
    \node (c8) at (\L*0.5,\L*-0.866) {C};
    \node (c4) at (\L*2*-0.5, \L*2*0.866) {C};
    \node (c1) at (\L*2*-1, \L*2*0) {C};
    \node (c3) at (\L*2*-0.5 - \L, \L*2*0.866) {C};
    \node (c2) at (\L*-0.5 - \L*2,\L*0.866) {C};
    \node (c11) at (\L*2*0.5,\L*2*0.866) {C};
    \node (o15) at (\L*2,0) {O};
    \node (c13) at (\L*0.5 + \L*2,\L*0.866) {C};
    \node (c12) at (\L*2*0.5 + \L,\L*2*0.866) {C};
    \node (o14) at (\L*0.5 + \L*2 + \L,\L*0.866) {O};
    \node (c0) at (\L*2*-1, \L*2*0 - \L) {C};
    \node (c18) at (\L*2*-1 - \L, \L*2*0- \L*0.5) {C};
    \node (h37) at (\L*0.5 + \L*2,\L*-0.866) {};
    \node (h37_var) at (\L*-0.5 - \L*2 -\L, \L*0.866) {H};
    \path
        (c18) edge [bond1] (c1)
        (c0) edge [bond1] (c1)
        (c1) edge [bond1] node (1-2) {} (c2)
        (c2) edge [bond1] (c3)
        (c3) edge [bond1] (c4)
        (c4) edge [bond1] (c5)
        (c5) edge [bond1] node (5-6) {} (c6)
        (c6) edge [bond1] (c7)
        (c7) edge [bond1] (c8)
        (c8) edge [bond1] (c9)
        (c9) edge [bond1] node (9-10) {} (c10)
        (c10) edge [bond1] (c11)
        (c11) edge [bond1] (c12)
        (c12) edge [bond1] (c13)
        (c13) edge [bond1] (o15)
        (c13) edge [bond2] (o14)
        (c1) edge [bond1] (c6)
        (c5) edge [bond1] (c10)
        (c9) edge [bond1] (o15)
        (o15) edge [bond0] node (15-H) {} (h37)
        (c2) edge [bond1] (h37_var)
        ;
\end{tikzpicture}
}

\end{center}
\caption{An example AAM for Stork's cyclisation of farnesyl acetic acid to
  ambreinolide \cite{Yoder:2005}. Note that only the single hydrogen in the
  transition state is shown, and while it is assumed in the model that it is the same
  hydrogen leaving and later entering, in actual chemistry it is a
  different hydrogen.}
\label{fig:reaction_ex}
\end{figure}
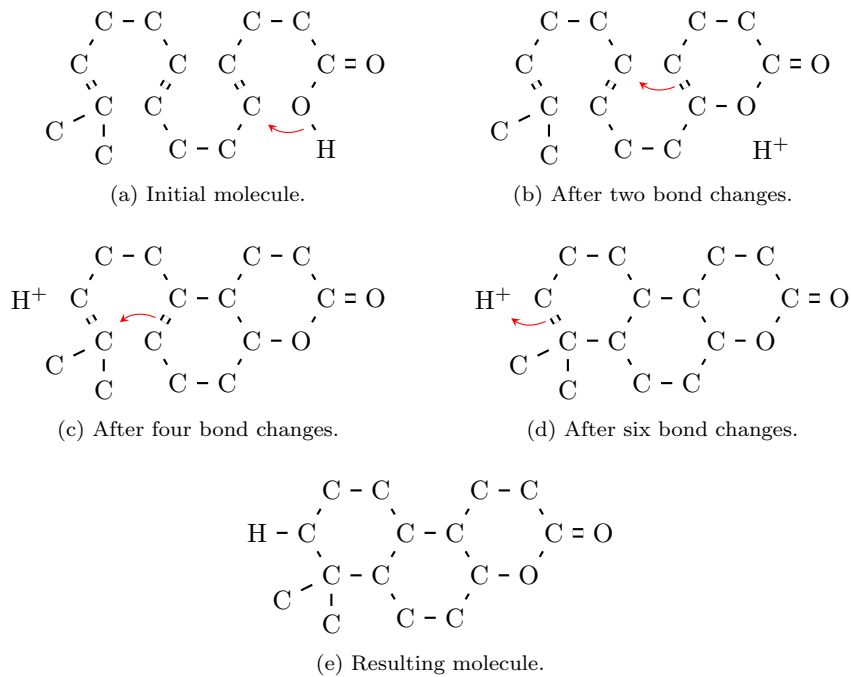

Taken together, {\altcyc} uses $O((n^2)^k) = O(n^{2k})$ recursive calls,
where $n=|V_1|=|V_2|$. Exploiting the fact that only edges to the special
vertex for a charge can be negative, this reduces to $O(n^{k + l})$, where
$l$ is the number of components in the transition state candidate, because
it suffices to examine only the $O(1)$ edges incident to $P\!\text{.head}$
or $\psi(P\!\text{.head})$ depending on whether we are making a negative or
positive step in the algorithm. In addition, {\altcyc} incurs the cost of 
the graph isomorphism check for completing the mapping.

In practice, however, the runtime is much lower since vertex labels
must match. The runtime nevertheless still depends heavily on $k$, and
thus the length of the optimal mapping of the instance. However, as
discussed $k$ can be assumed to be small for the case of inferring
chemical transformation rules. Due to depth first
strategy, the memory consumption of {\altcyc} is $O(n)$.

\subsection{{\ilptwo} --- An Integer Linear Program}
The AAM problem can also be phrased as an ILP. We use binary variables
$m_{ip}$ to encode the mapping $\psi$ as $m_{ip}=1$ iff $\psi(i) = p$ and
$m_{ip}=0$ for all other combinations of $i$ and $p$.  To enforce that $\psi$ is
vertex label preserving we set $m_{ip} = 0$ for $l_1(i) \neq l_2(p)$, and
to ensure $\psi$ is a bijection we formulate the following linear
constraints.
\begin{equation*}
\forall i\in V_1\colon \sum_{p\in V_2} m_{ip} = 1 
\qquad\textrm{and}\qquad 
\forall p\in V_2\colon \sum_{i\in V_1} m_{ip} = 1 
\end{equation*}
The most obvious way to proceed would be to keep track of the mapping
between the edge sets using either binary variables describing whether a
bond is mapped or not as in \cite{First:12}, or integer variables that
denote the weight change if a bond is mapped, and zero for unmapped bonds.
For such variables we would need $O(|V|^4)$ constraints,
however. Empirically we found that ILP-solvers quickly run out of memory and
become very slow for such a model.

Though there already exist ILP formulations of similar problems with only $O(|V|^2)$ constraints \cite{Latendresse:12}, obtained by exploiting the sparseness of molecule graohs, we propose a new ILP formulation based on the Kaufmann and Broeckx
linearization of the quadratic assignment problem \cite{Burkard:99}, which also
needs only $O(|V|^2)$ constraints.

We introduce integer variables $c^+_{ip}\in\mathbb{N}_0$ and
$c^-_{ip}\in\mathbb{N}_0$ that model the positive and negative weight
changes respectively of all edges incident to vertex $i\in V_1$ if $\psi(i)
= p$. Both $c^+_{ip}$ and $c^-_{ip}$ are zero for all other combinations of
$i$ and $p$. Making use of the fact that weight changes are balanced, i.e.\
$\sum_{e\in \delta(v)} w_\psi(e) = 0$ for all $v\in V_\psi$, we can use the
following constraint for all $i\in V_1$:
\begin{equation*}
  \sum_{p\in V_2}c^+_{ip} = \sum_{p\in V_2}c^-_{ip}
\end{equation*}
We also substitute them in the objective function:
\begin{equation*}
  \textit{obj} = \sum_{\mathclap{(i,p)\in V_1\times V_2}}\; 
  c^+_{ip} \;\;+\;\; \sum_{\mathclap{(i,p)\in V_1\times V_2}}\; c^-_{ip}
\end{equation*}
Since the change variables are included in the objective function they 
will implicitly be constrained from above. In order to constrain them 
from below we use the following constraints for all $(i,p)\in V_1\times V_2$:
\[c^+_{ip} \geq (m_{ip}-1)\cdot M + \sum_{\mathclap{(j,q)\in V_1\times V_2}}
  \;  m_{jq}\cdot\max\{0, w_2(\{p,q\}) - w_1(\{i,j\})\} \] 
\[c^-_{ip} \geq (m_{ip}-1)\cdot M + \sum_{\mathclap{(j,q)\in V_1\times V_2}}
  \;  m_{jq}\cdot\max\{0, w_1(\{i,j\}) - w_2(\{p,q\})\} \] 
where $M$ is a suitably large constant. It suffices to set $M$ to the 
largest weighted degree to void the constraint when $m_{ip}=0$.
The first term voids the constraints if $m_{ip}\neq  1$.
The sums correspond to the sum of all positive (negative) changes of edges incident to $i$ and $p$ respectively, if indeed these edges are mapped to each other. 

Unlike {\altcyc} we have little control over intermediate steps in the 
reaction, but using {\ilptwo} we have much freedom to modify the cost model 
used. Assuming we have an integer linear programming solver available 
{\ilptwo} takes very little time to implement.

\subsection{Enumeration of All Optimal Atom-Atom Mappings}
So far we have focused on the problem of finding a single AAM. The solution
of the optimization problem is in general not unique, however. A particular
problem in this context are symmetries of the educt or product molecules,
because this may bloat the number of AAMs. We are therefore interested only
in nonequivalent AAMs.

\begin{definition}[Equivalent Atom-Atom Mappings]\label{defn:equiv}
  For a given AAM define $G_\psi = (V_\psi, E_\psi, l_\psi)$ with
  vertex set $V_\psi = V_1$, edge set
  $E_\psi = E_1 \cup \psi^{-1}(E_2)$, and label function
  $l_\psi(x) = (l_1(x), l_2(\psi(x)))$. If $x\notin \dom(l_i)$ then
  $l_i(x) = \varepsilon_i$, where $\varepsilon_i$ is some label not in
  $\ran(l_i)$, denoting a non-edge. We say two atom-atom mappings,
  $\psi$ and $\varphi$ are equivalent if the graphs $G_\psi$ and
  $G_\varphi$ are isomorphic.
\end{definition}

Now, let us consider whether a transition state candidate of an atom-atom
mapping uniquely defines the full mapping.

\begin{definition}[Completion of Partial Mapping]
Given a partial AAM $\psi'\colon A\subset V_1 \rightarrow B\subset
V_2$, a completion of $\psi'$ is an AAM such that $\psi|_{A} = \psi'$ 
and outside $A$, $\psi$ preserves all properties of $G_1$ and $G_2$.
\end{definition}
Note that such a completion need not exist for a given partial AAM. 
\begin{proposition}[Partial Mapping]\label{thm:partial}
  If $\psi$ and $\varphi$ are two completions of a partial AAM $\psi'$,
  $\psi$ and $\varphi$ are equivalent.
\end{proposition}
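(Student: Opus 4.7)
The plan is to exhibit an explicit isomorphism $\eta\colon G_\psi \to G_\varphi$ by setting $\eta := \varphi^{-1}\circ\psi$, which is a bijection of $V_1$. Because $\psi$ and $\varphi$ both extend $\psi'$, $\eta$ restricts to the identity on $A$; and by construction $\varphi\circ\eta = \psi$, which immediately yields $\eta^{-1}\bigl(\varphi^{-1}(E_2)\bigr) = \psi^{-1}(E_2)$.

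Vertex labels then go through routinely. One has $l_\psi(x) = (l_1(x), l_2(\psi(x)))$ and $l_\varphi(\eta(x)) = (l_1(\eta(x)), l_2(\varphi(\eta(x))))$; the second coordinates match because $\varphi\circ\eta = \psi$, and the first match because $\psi$ and $\varphi$ both preserve vertex labels, so $l_1(\eta(x)) = l_2(\varphi(\eta(x))) = l_2(\psi(x)) = l_1(x)$.

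The main step is edge preservation. Since $\eta$ already sends $\psi^{-1}(E_2)$ bijectively onto $\varphi^{-1}(E_2)$, it suffices to show $\eta(E_1) = E_1$, i.e.\ that $\eta$ is a $G_1$-automorphism at the edge-set level. Here I invoke the defining property of a completion: outside $A$ the mappings $\psi$ and $\varphi$ preserve the edge/weight structure of $G_1$ and $G_2$, which I read as $\{u,v\}\in E_1 \Leftrightarrow \{\psi(u),\psi(v)\}\in E_2$ (and similarly for $\varphi$) whenever at least one of $u,v$ lies in $V_1\setminus A$. A short case split then finishes the argument: edges with both endpoints in $A$ are fixed because $\eta|_A = \mathrm{id}_A$; for a boundary edge $\{u,v\}$ with $u\in A, v\notin A$, the relation $\{u,v\}\in E_1$ is equivalent, by $\psi$'s completion property, to $\{\psi'(u),\psi(v)\}\in E_2$, which coincides with $\{\varphi(u),\varphi(\eta(v))\}\in E_2$ and hence, by $\varphi$'s completion property, to $\{u,\eta(v)\}\in E_1$; and edges entirely outside $A$ are handled by applying the same two-step chain to both endpoints. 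Together with the label check, $\eta$ is a graph isomorphism $G_\psi \to G_\varphi$, so $\psi$ and $\varphi$ are equivalent.

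The main obstacle is pinning down the informal phrase \emph{``outside $A$, $\psi$ preserves all properties of $G_1$ and $G_2$''} in the definition of completion: the argument crucially requires that no edge incident to a vertex of $V_1\setminus A$ contributes to $w_\psi$, so that the edge structure away from $A$ is rigidly transported by $\psi$ and by $\varphi$. Once this reading is granted, the candidate isomorphism $\eta = \varphi^{-1}\circ\psi$ is essentially forced, and the case analysis above becomes routine.
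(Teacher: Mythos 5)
Your proof is correct, but it takes a different route from the paper's. The paper argues that $G_\psi$ and $G_\varphi$ are in fact \emph{identical} as labeled graphs: on $\dom(\psi')$ they coincide because $\psi$ and $\varphi$ both restrict to $\psi'$, and outside $\dom(\psi')$ the completion property forces the $l_2$-components of the labels and the mapped edges to reproduce those of $G_1$ exactly, so the required isomorphism is simply the identity on $V_1$. You instead exhibit the (generally non-trivial) isomorphism $\eta = \varphi^{-1}\circ\psi$ and verify it directly, which is what costs you the three-way case analysis showing $\eta(E_1)=E_1$. Your route is slightly longer but arguably more robust: it pins down exactly which consequence of the informal phrase ``preserves all properties outside $A$'' is needed (edges meeting $V_1\setminus A$ are rigidly transported by both maps), it uses that $\eta$ fixes $A$ pointwise and hence maps $V_1\setminus A$ onto itself, and it would still work under a reading of the definition in which the two graphs are isomorphic without being literally equal. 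One cosmetic point: your case analysis establishes only edge \emph{membership}, $\{u,v\}\in E_1 \Leftrightarrow \{\eta(u),\eta(v)\}\in E_1$, whereas an isomorphism $G_\psi\to G_\varphi$ also requires $l_1(e)=l_1(\eta(e))$ and equality of weights on $E_1$; this follows by the same two-step chain $l_1(e)=l_2(\psi(e))=l_2(\varphi(\eta(e)))=l_1(\eta(e))$ that you used for vertex labels, so it is an omission of wording rather than a gap.
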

\begin{proof}
  Consider the two AAMs $\psi$ and $\varphi$ and their associated graphs
  $G_\psi$ and $G_\varphi$. By assumption, they are both completions of the
  same partial AAM $\psi'$; therefore the two induced sub-graphs
  $G_\psi[\dom(\psi')]$ and $G_\varphi[\dom(\psi')]$ are identical.
  Consider the subgraphs $G':= G_\psi \setminus E(G_\psi[\dom(\psi')])$ and
  $G'':= G_\varphi \setminus E(G_\varphi[\dom(\psi')])$ without edges in
  $G_\psi[\dom(\psi')]$. 
  $G'$ and $G''$ both are identical to $G_1\setminus
  E(G_\psi[\dom(\psi')])$ if only considering the labels from $l_1$ 
  in each of $G_\psi$ and $G_\varphi$. As both $\psi$ and $\varphi$
  preserve all properties of $G_1$ and $G_2$ outside $\dom(\psi')$, the labels 
  from $l_2$ are always identical to the labels from $l_1$ outside
  $\dom(\psi')$.

  Thus $G_\psi$ and $G_\varphi$ are isomorphic and by definition $\psi$ and
  $\varphi$ are equivalent.  
\end{proof}

Prop.~\ref{thm:partial} can be applied in different ways.
In {\altcyc} it shows we only have to complete each candidate
transition state once in order to enumerate all mappings. In {\ilptwo} it
can be used to exclude solutions based on mapping variables defining the
transition states instead of all mapping variables.

\section{Results}
The RHEA \cite{Morgat:15} database (v.\ 50), which provides access to a
large set of expert-curated biochemical reactions, has been used to test
our suggested AAM algorithms, and to underline the necessity of graph
transformation methods for network completion. We exclude all reactions with
unspecified repeating units and wildcards, resulting in a set of $19753$
reactions involving a set, $M$, of $3786$ non-isomorphic molecular graphs. We performed
a statistical analysis of RHEA, that shows how often molecules are used in
the reaction listed in the database, and how many non-isomorphic isomers
are stored in RHEA. Interestingly, terpene chemistry \cite{Degenhardt:2009}
clearly dominates the high frequency isomers (see Appendix \ref{app:rhea}). Due
to space limitations, we focus on a brief runtime analysis and network
completion results. As {\altcyc} constructs solutions in a stepwise
fashion, a chemical mechanism explaining the bond changes as subsequent
transformations is naturally inferred. An example for a mechanistic inference of
Stork's cyclisation of farnesyl acetic acid to ambreinolide
\cite{Yoder:2005} is given in Fig.~\ref{fig:reaction_ex}.

\par\noindent\textbf{Runtime.} We compared {\altcyc}, {\ilptwo}, and 
a na{\"i}ve ILP-implementation with $O(n^4)$ constraints, \texttt{ILP4},
with regard to their ability of enumerating all non-equivalent AAMs within
a fixed runtime (see appendix \ref{app:runtime}). We found that {\ilptwo} drastically outperforms the
na{\"i}ve ILP-implementation and also is systematically more efficient than
{\altcyc}. The latter has a (small) advantage for instances with 
small transitions states.
For both methods we see an exponential decline in ratio of quickly solved instances as size of instances grow, this corresponds well with the expected exponential runtime.

\par\noindent\textbf{Network Completion.}
Databases of metabolic networks are by no means complete because the
enzymes catalyzing many of the reactions in particular in the so-called
secondary metabolism have remained unknown. Furthermore, for almost one
third of the known metabolic activities, no protein sequences are
  known that could encode the corresponding enzyme.  \emph{Network
  completion} is an important task to fill gaps i.e.\ missing reaction
steps, in genome-scale metabolic networks. 
Reaction perception, i.e.\ finding AAMs, is the only
technique capable of finding possible candidates for the missing reactions,
where homology based methods fail, due to lack of data.  

Inferring all
candidate \emph{2-to-2} reactions addresses this issue by determining for
all disjoint pairs $A,B$ of multisets $A$ (one or two educt molecules,
potentially isomorphic) and multisets $B$ (one or two product molecules, also potentially isomorphic),
whether there is a chemically plausible reaction transforming $A$ to $B$. By
Prop.~\ref{thm:cyclic}, {\em any} reaction satisfying mass and charge
balance has a cyclic transition state. 

Let $R_{2,2}$ denote the set of all sets $\{A,B\}$ such that $A$ and $B$ are disjoint multi-subsets of the set of molecules $M$, both of size at most $2$ with $A$ and $B$ containing the same vertex labels, charges, etc.
The set of test instances $R_{2,2}$
of \emph{2-to-2} reactions that satisfy mass balance can be extracted from a database with
molecule set $M$ in time $O(|M|^2 \log |M| + |R_{2,2}|$) using
Algorithm~\ref{alg:2to2} (see Appendix \ref{app:2to2}). We obtain a set of
$|R_{2,2}|=114,429,849$ balanced reaction candidates with at most two
molecules on either side of the reaction.

It is not feasible to test 100 million candidates for chemical feasibility
in an exact manner. Using the length of the transition state as a filter,
however, will remove implausible candidates as well as multi-step
mechanisms. The length of the transition state can be bounded in both
{\altcyc} as well as {\ilptwo}. We used {\altcyc} because of its performance
advantage with small transition states. In a random sample of 10.000
instances drawn from $R_{2,2}$ we found 34, 59, and 167 reactions with
transition states of length 4, 6, and 8, respectively.  Extrapolating from
this sample we have to expect approximately three million candidate
reactions with AAMs that will need to be examined in more detail. Clearly
this number is too large for a biochemical network. Further
pruning of the candidate list will thus require additional information,
e.g., on the energetics of the reactions and on these reaction mechanisms
plausibly catalyzed by enzymes. However, it underlines the need for
graph transformation techniques for computing realistic candidate sets.   

\section{Conclusion}
Graph transformation systems have great potential as a model of chemistry in
  particular in the context of large reaction networks. Their practical
  usefulness, however, stands and falls with the ability to produce
  collections of transformation rules that closely reflect chemical
  reality. We have shown here that the extraction of AAMs from
  educt/product pairs is a necessary first step because the restriction of
  the graph transformation to the reaction center, which is uniquely
  determined by the AAM, provides a minimal description of the
  corresponding reaction rule. We have shown formally that it is not
  sufficient to solve a general graph editing problem. Instead, the cyclic
  nature of the transition states must be taken into account as additional
  constraints. With {\altcyc} and {\ilptwo} we have introduced two
  complementary approaches to solve this chemically constrained maximum
  subgraph problem.  The constructive {\altcyc} approach performs better on
  short cycle instances. If more complex transition states need to be
  considered or if flexibility in the cost function is required {\ilptwo}
  becomes the method of choice.

  Advances in high-throughput sequencing technologies drives the
  reconstruction of organism-specific large-scale metabolic networks from
  genomic sequence information. Reaction perception, as illustrated here on
  the Rhea database, is currently the only computational technique to
  suggest missing reactions in the reconstructed networks once the methods
  of comparative genomics to infer enzyme activities are exhausted. We have
  demonstrated here that efficient computation of AAMs serves as first
  effective step. Much remains to be done, however. Most importantly, the
  AAM determines only a minimal reaction rule confined to the reaction
  center. The feasibility of chemical reactions, however, also depends on
  additional context in the vicinity of the reaction center. While graph
  grammar systems readily accommodate non-trivial context
  \cite{Benkoe:03b,Andersen:14a}, we have yet to develop methods to infer the
  necessary contexts from the huge body of chemical reaction
  knowledge. Once this is solved, such more elaborate rules will form a
  highly efficient filter for the candidate AAMs. In this context the
  stepwise construction of the transition state in {\altcyc} holds further
  promise: context information could be used efficiently already in the AAM
  construction step to prune its search tree, simultaneously 
  increase the chemical realism of the solutions and its computational 
  efficiency.

\newpage
\section*{Acknowledgments}
This work was supported in part by the Volkswagen Stiftung proj.\
no.\ I/82719, and the COST-Action CM1304 ``Systems Chemistry'' and
by the Danish Council for Independent Research, Natural Sciences.

\bibliography{atomatom.bib}
\bibliographystyle{plain}

\newpage
\appendix

\section{Statistical Analysis of Rhea}\label{app:rhea}
  Of the $M=3786$ non-isomorphic molecular graphs in
  RHEA, $2204$ are identified uniquely by their sum formula. While
  $2030$ of the molecules appear only in a minimum of 4 reactions,
  some compounds take part in a very large fraction of all reactions
  in RHEA, e.g., \ce{H^+} participates in 11,1147
  reactions, some of which are different descriptions of similar reactions where only the direction of the reaction differs, 5055 of these are truly distinct, adenosine di-, and tri-phosphate (and it derivatives),
  water, and dioxide each participate in more than 2000 reactions
  (depicted as red dots in Fig.~\ref{fig:freq1} (right)).  The maximum number of
  isomers (i.e., compounds that have the same sum formula but a
  non-isomorphic graph representation) is $63$. The corresponding sum
  formula is \ce{C15H24}. Interestingly, most of the large sets of
  isomers in RHEA are terpenes, condensates of identical five carbon
    atom building blocks. The terpenes form a combinatorial class of
    polycyclic ring-systems via complex sequences of cyclisation and
    isomerization reactions.
  Fig.~\ref{fig:freq1} (left) summarizes the results (terpenes marked with red).

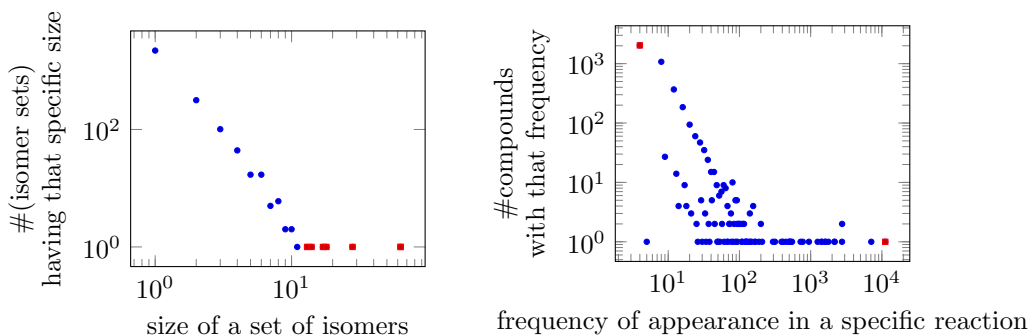
\begin{figure}
\pgfplotsset{width=0.45*\textwidth}
\begin{tikzpicture}
\begin{loglogaxis}[
    xlabel = size of a set of isomers,
    ylabel style={align=center},
    ylabel = \#(isomer sets)\\having that specific size,
    ]
\addplot+[only marks,mark size=1pt]
table {samp2.dat};
\addplot+[only marks,mark size=1pt, mark=square*,red]
table {
x y
13  1 
14  1 
17  1 
18  1 
28 1 
63 1
63 1
};
\end{loglogaxis}
\end{tikzpicture}
\qquad
\begin{tikzpicture}
\begin{loglogaxis}[
    xlabel = frequency of appearance in a specific reaction,
    ylabel style={align=center},
    ylabel = \#compounds\\with that frequency,
    ]
\addplot+[only marks,mark size=1pt]
table {samp1.dat};
\addplot+[only marks,mark size=1pt, mark=square*,red]
table {
x y
4 2030
11147 1
};
\end{loglogaxis}
\end{tikzpicture}
\caption{Distribution of isomers and frequency of participation in reactions in Rhea. Left plot shows a few sets of isomers are very large, while most compounds in Rhea are unique up to sum formula of those compounds. Right plot shows the frequency with which a compound participates in reactions.}
\label{fig:freq1}
\end{figure}

\section{Analysis of Runtime}\label{app:runtime}
As we are mainly interested in single step reactions, we restricted our algorithms to only look for connected, vertex-disjoint transition states during the comparison. Fig.~\ref{fig:solved} shows the fraction of instances where {\altcyc}, {\ilptwo} and a na{\"i}ve ILP-implementation with $O(n^4)$ constraints, \texttt{ILP4}, are able to enumerate all non-equivalent atom-atom mappings for different instance size categories as well as absolute number of instances solved divided by solution size.

Only very few instances that are not completely solved within the first 60 seconds are solved within reasonable time (one hour). So there seems to be a sharp divide between easy and hard instances. From the plot in Fig.~\ref{fig:solved} (left) of the fraction of instances solved fast we observe an exponential decline in ratio of solved instances. This corresponds well with the expected exponential runtime of the algorithms.

\newsavebox{\tempbox}

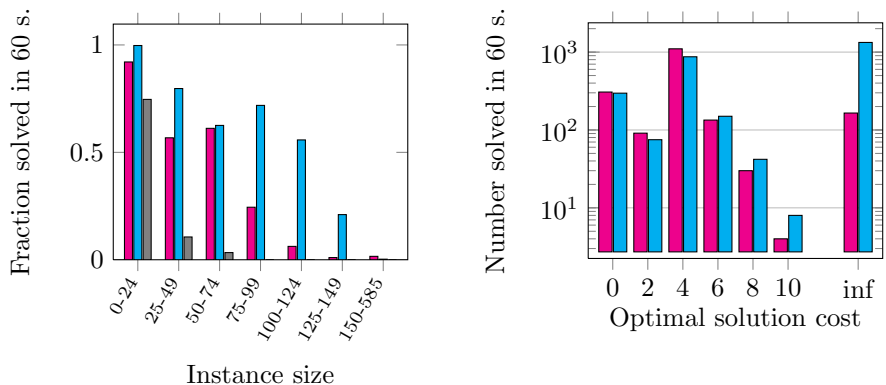
\begin{figure}
\pgfplotsset{width=0.45*\textwidth}
\sbox{\tempbox}{%
\begin{tikzpicture}
\begin{axis}[
    symbolic x coords = {0-24,25-49,50-74,75-99,100-124,125-149,150-585},
    xlabel = Instance size,
    ylabel = Fraction solved in 60 s.,
    ymin = 0,
    ybar=0.5pt,
    x tick label style={rotate=60,anchor=east, font=\scriptsize},
    xtick=data,
    bar width=3,
    xlabel near ticks
    ]
    \addplot[fill=magenta, mark=none] coordinates{
        (100-124, 0.0622222222222)
        (25-49, 0.567660550459)
        (0-24, 0.920698924731)
        (75-99, 0.244521337947)
        (125-149, 0.0101694915254)
        (150-585, 0.015873015873)
        (50-74, 0.611782477341)
    };
    \addplot[fill=cyan, mark=none] coordinates{
        (100-124, 0.557777777778)
        (25-49, 0.797018348624)
        (0-24, 0.997311827957)
        (75-99, 0.718569780854)
        (125-149, 0.210169491525)
        (150-585, 0.0026455026455)
        (50-74, 0.625377643505)
    };
    \addplot[fill=gray, mark=none] coordinates{
        (100-124, 0.0)
        (25-49, 0.10606060606061)
        (0-24, 0.746478873239)
        (75-99, 0.0)
        (125-149, 0.0)
        (150-585, 0.0)
        (50-74, 0.03333333333333)
    };
\end{axis}
\end{tikzpicture}
}
\usebox{\tempbox}
\qquad
\vbox to \ht\tempbox{%
\begin{tikzpicture}
\begin{semilogyaxis}[
    symbolic x coords = {0,1,2,3,4,5,6,7,8,9,10,11,12,13,inf},
    ymin = 0,
    ymajorgrids = true,
    ybar=0.5pt,
    xtick = data,
    xlabel = Optimal solution cost,
    ylabel = Number solved in 60 s.,
    bar width=5
    ]
    \addplot[fill=magenta, mark=none] coordinates {
        (0, 307)
        (2, 91)
        (4, 1102)
        (6, 134)
        (8, 30)
        (10, 4)
        (inf, 165)
    };
    \addplot[fill=cyan, mark=none] coordinates {
        (0, 297)
        (2, 75)
        (4, 871)
        (6, 150)
        (8, 42)
        (10, 8)
        (inf, 1331)
    };
\end{semilogyaxis}
\end{tikzpicture}
\vfil
}
\caption{Fraction and number of instances where all optimal atom-atom maps are found in 60 seconds (user time) by instance size and optimal solution cost for {\altcyc} (magenta), {\ilptwo} (cyan) and \texttt{ILP4} (gray).}
\label{fig:solved}
\end{figure}

As we restricted the solution set certain instances are proven infeasible by {\ilptwo}, while {\altcyc} will continue searching for solutions until the parameter $k$, the number of weight changes, gets arbitrarily high. We chose to deem instances where {\altcyc} found no solutions for $k\leq 10$ infeasible and terminate the search. These two classes of solutions are marked in the rightmost column in Fig.~\ref{fig:solved}. Note that the performance of {\altcyc} on the infeasible class of instances depends heavily on the somewhat arbitrary choice of maximum $k$.

Both ILP models are implemented using CPLEX, an efficient state of the art MIP-solver. {\altcyc} and {\ilptwo} has been tested on a total of 4295 Rhea instances, while \texttt{ILP4} has only been tested on a subset of these of size 250.


\def\deltaw{\texttt{WeightAlongPath}}
\def\complete{\texttt{Complete}}

\section{Algorithmic Details}\label{app:code}
For completeness we include pseudo-code for the sub-procedures used in the paper.

\par\noindent\textbf{Pseudo-code for {\deltaw}:} In {\altcyc$^*$} (see Algorithm \ref{alg:altcyc_nondisjoint}) we need to find all previous changes to an edge $\{i,j\}$ currently under examination, $w_P(\{i,j\})$.

In Algorithm \ref{alg:deltaw} we show how to do this in time $O(|P|)$, where $|P|\in O(k)$. It is possible to find $w_P(e)$ in constant time, but this would require much more complicated data structures or making changes to the graphs we work on and as $k$ is in practice very small, this method is preferred.

To find $w_P(e)$ for a list of paths, add $w_P(e)$ for all paths in the list.

\begin{algorithm}[H]
\caption{\texttt{WeightAlongPath}$(\{i,j\}, P)$}
\label{alg:deltaw}
\begin{algorithmic}
    \STATE $w_P \leftarrow 0$
    \STATE $\sigma \leftarrow 1$
    \FOR{$i'$ \textbf{from} $0$ \textbf{to} $|P|-2$}
        \STATE $j'\leftarrow i'+1$
        \IF{$\{i',j'\}=\{i,j\}$}
            \STATE $w_P \leftarrow w_P + \sigma$
        \ENDIF
        \STATE $\sigma \leftarrow -1\cdot \sigma$
    \ENDFOR
\end{algorithmic}
\end{algorithm}

\par\noindent\textbf{Pseudo-code for {\complete}:} When a transition state candidate $\psi'$ is found we need to ensure it can be extended into a complete atom-atom mapping. This can be done as described in Algorithm \ref{alg:complete}. Note that the two graphs $G_1$ and $G_2$ are assumed implicitly known. The algorithm works both for a single path, $P$, or where $P$ represents a list of paths.

The only non-trivial detail in Algorithm \ref{alg:complete} is that it is not correct to remove all edges in the induced subraph on the domain of $\psi'$, the weight change needs to be sufficient, and there may be unchanged cords to consider.

\begin{algorithm}[H]
\caption{\texttt{Complete}$(\psi', P)$}
\label{alg:complete}
\begin{algorithmic}
    \FOR{$e\in P$}\COMMENT{Here $P$ is considered a set of edges}
        \STATE $w_P \leftarrow {\deltaw}(e, P)$
            \IF{$w_P = w_2(\psi(e)) - w_1(e)$}
                \STATE Remove $e$ from $G_1$ and $\psi(e)$ from $G_2$
            \ELSE
                \STATE \textbf{fail}
            \ENDIF
    \ENDFOR
    \FOR{$(i,p)\in V_1\times V_2$ \textbf{where} $\psi'(i) = p$}
        \STATE Relabel $i$ and $p$ to have identical, otherwise unique labels
    \ENDFOR
    \STATE \textbf{return} an isomorphism from $G_1$ to $G_2$
\end{algorithmic}
\end{algorithm}

\par\noindent\textbf{Finding 2-to-2 Candidates in $O(n^2\log n)$ Comparisons.}\label{app:2to2}
In order to generate all $O(n^4)$ candidate reactions with no more than two molecules in the educts or products we use Algorithm \ref{alg:2to2}. A set of molecules, $M$, is given, as well as a method to obtain the distribution of atoms and charges of the molecules $h$, in practice some implementation of sparse vectors. We assume we keep pointers to the original molecules that resulted in each distribution, and we get these with the function $mol$.
\begin{algorithm}[h!tb]
\caption{\texttt{2to2}$(M)$}
\label{alg:2to2}
\begin{algorithmic}
    \STATE $\mathcal{H} \leftarrow h(M) \cup \{\vec{0}\}$
    \STATE generate $H = \{h_1 + h_2 \mid (h_1, h_2)\in \mathcal{H}\times \mathcal{H} \land h_1 \leq h_2\}$ as an array 
    \STATE \texttt{Sort}$(H)$
    \FOR{$i\leftarrow 1$ \textbf{to} len$(H)-1$}
        \STATE $j\leftarrow i +1$
        \WHILE{$j\leq \text{len}(H) \land H[i] = H[j] $}
            \STATE \textbf{output} $(mol(H[i]), mol(H[j]))$
            \STATE $j\leftarrow j+1$
        \ENDWHILE
    \ENDFOR
\end{algorithmic}
\end{algorithm}

The algorithm is dominated by one of two things, either the sorting of the length $n^2$ array $H$ (where $n = |M|$), or the time to output candidates $k\in O(n^4)$, the resulting runtime is then $O(n^2\log n + k)$.


\section{Example of Aromatic Structure}\label{app:aromatic}

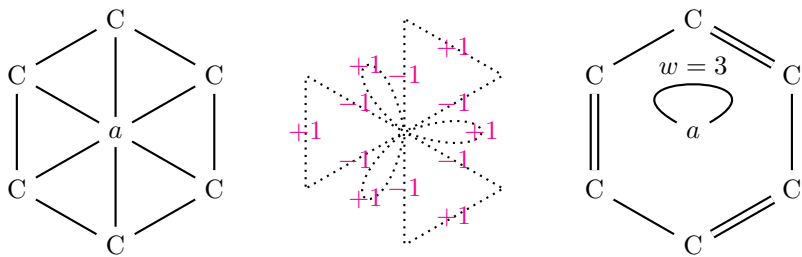
\begin{figure}[h!tb]
\begin{center}
\def\L{1.5}
\def\W{3.8}

\begin{tikzpicture}[every loop/.style={}]
    \node  (c_1_l) at (0*\L, 1*\L)         {C};
    \node  (c_2_l) at (0.866*\L, 0.5*\L)   {C};
    \node  (c_3_l) at (0.866*\L, -0.5*\L)  {C};
    \node  (c_4_l) at (0*\L, -1*\L)        {C};
    \node  (c_5_l) at (-0.866*\L, -0.5*\L) {C};
    \node  (c_6_l) at (-0.866*\L, 0.5*\L)  {C};
    \node  (a_l)   at (0, 0)               {$a$};
    \path
        (c_2_l) edge [bond1] (c_3_l)
        (c_4_l) edge [bond1] (c_5_l)
        (c_5_l) edge [bond1] (c_6_l)
        (c_6_l) edge [bond1] (c_1_l)
        (c_1_l) edge [bond1] (c_2_l)
        (c_3_l) edge [bond1] (c_4_l)
        (c_1_l) edge [bond1] (a_l)
        (c_2_l) edge [bond1] (a_l)
        (c_3_l) edge [bond1] (a_l)
        (c_4_l) edge [bond1] (a_l)
        (c_5_l) edge [bond1] (a_l)
        (c_6_l) edge [bond1] (a_l)
    ;

    \node [dummy] (c_1_c) at (0*\L + \W, 1*\L)         {};
    \node [dummy] (c_2_c) at (0.866*\L + \W, 0.5*\L)   {};
    \node [dummy] (c_3_c) at (0.866*\L + \W, -0.5*\L)  {};
    \node [dummy] (c_4_c) at (0*\L + \W, -1*\L)        {};
    \node [dummy] (c_5_c) at (-0.866*\L + \W, -0.5*\L) {};
    \node [dummy] (c_6_c) at (-0.866*\L + \W, 0.5*\L)  {};
    \node [dummy] (a_c) at (0 + \W, 0) {};
    \path
        (c_1_c) edge [change] node {$+1$} (c_2_c)
        (c_3_c) edge [change] node {$+1$} (c_4_c)
        (c_5_c) edge [change] node {$+1$} (c_6_c)
        (c_4_c) edge [change] node {$-1$} (a_c)
        (c_2_c) edge [change] node {$-1$} (a_c)
        (c_6_c) edge [change] node {$-1$} (a_c)
        (c_1_c) edge [change] node {$-1$} (a_c)
        (c_3_c) edge [change] node {$-1$} (a_c)
        (c_5_c) edge [change] node {$-1$} (a_c)
        (a_c) edge [change, in=-20, out=20, loop, min distance=40] node {$+1$} (a_c)
        (a_c) edge [change, in=120-20, out=120+20, loop, min distance=40] node {$+1$} (a_c)
        (a_c) edge [change, in=240-20, out=240+20, loop, min distance=40] node {$+1$} (a_c)
        ;

    \node  (c_1_r) at (0*\L + 2*\W, 1*\L)         {C};
    \node  (c_2_r) at (0.866*\L + 2*\W, 0.5*\L)   {C};
    \node  (c_3_r) at (0.866*\L + 2*\W, -0.5*\L)  {C};
    \node  (c_4_r) at (0*\L + 2*\W, -1*\L)        {C};
    \node  (c_5_r) at (-0.866*\L + 2*\W, -0.5*\L) {C};
    \node  (c_6_r) at (-0.866*\L + 2*\W, 0.5*\L)  {C};
    \node  (a_r)   at (0 + 2*\W, 0)               {$a$};
    \path
        (c_1_r) edge [bond2] (c_2_r)
        (c_2_r) edge [bond1] (c_3_r)
        (c_3_r) edge [bond2] (c_4_r)
        (c_4_r) edge [bond1] (c_5_r)
        (c_5_r) edge [bond2] (c_6_r)
        (c_6_r) edge [bond1] (c_1_r)
        (a_r) edge [bond1, in=150, out=30, loop] node [above] {$w=3$} (a_r);
        
\end{tikzpicture}

\end{center}
\caption{Illustration of modeling of aromatic cycle. Left is an aromatic cycle, right is the same cycle in Kukel\'e form. Edge labels are not shown, and edge weight is implied with multiple parallel lines. The figure in the middle depicts the alternating transition state between the two assuming AAM by position of atoms.}
\label{fig:aromatic_ex}
\end{figure}

It is non-trivial that the model presented here of aromatic complexes will allow for AAMs with cyclic transitions states, Fig.~\ref{fig:aromatic_ex} illustrates how this can be done. We add special aromatic vertices with loops to either $G_1$ or $G_2$ to ensure the AAM is still a bijection and that a mapping is feasible.

\end{document}